\renewcommand{\det}{{\rm Det}\,}
\newcommand{\be}{\begin{equation}}
\newcommand{\ee}{\end{equation}}
\newcommand{\bea}{\begin{eqnarray}}
\newcommand{\eea}{\end{eqnarray}}
\newcommand{\ket}[1]{|#1\rangle}
\newcommand{\bra}[1]{\langle#1|}
\newcommand{\eq}[1]{Eq.~(\ref{#1})}
\newtheorem{Theorem}{Theorem}
\begin{document}

\title{
Interplay between computable measures of entanglement and other quantum correlations
%Entanglement versus quantum correlations:  computable measures for qubits and qudits\\
%Quantum correlations majorize entanglement for pure qudits and mixed qubits\\
%Quantum correlations majorize entanglement: analytical study of  qubits and qudits\\
%Geometric discord majorizes entanglement for pure qudits and mixed qubits
}

\author{Davide Girolami}
\email{pmxdg1@nottingham.ac.uk}
\author{Gerardo Adesso}
\email{gerardo.adesso@nottingham.ac.uk}
\affiliation{School of Mathematical Sciences, University of Nottingham, University Park, Nottingham NG7 2RD, United Kingdom}

\date{\today}

\begin{abstract}
Composite quantum systems can be in generic states characterized not only by entanglement, but also by more general quantum correlations. The interplay between these two signatures of nonclassicality is still not completely understood. In this work we investigate this issue focusing on computable and observable measures of such correlations: entanglement is quantified by the negativity ${\cal N}$, while general quantum correlations are measured by the (normalized) geometric quantum discord $D_G$. For two-qubit systems, we find that the geometric discord reduces to the squared negativity on pure states, while the relationship $D_G \ge {\cal N}^2$ holds for arbitrary mixed states. The latter result is rigorously extended to pure, Werner and isotropic states of two-qudit systems for arbitrary $d$, and numerical evidence of its validity for arbitrary states of a qubit and a qutrit is provided as well. Our results establish an interesting hierarchy, that we  conjecture to be universal, between two relevant and experimentally friendly nonclassicality indicators. This ties in with the intuition that general quantum correlations should at least contain and in general exceed entanglement on mixed states of composite quantum systems.
\end{abstract}

\pacs{03.65.Ta, 03.65.Ud, 03.67.Mn}

\maketitle

\section{Introduction}
% A peculiarity of the states of a multipartite physical system is the description of the correlations between the subsystems. Such correlations can be genuinely  classical, if they are revealed by classical Information Theory as analogue of correlations between random variables, or quantum, if they emerge just in the quantum scenario.\\

The distinction between quantum and classical correlations in the state of a multipartite physical system is a fundamental problem with far-reaching implications \cite{OZ,HV}. Correlations can be regarded as genuinely  classical if they are essentially revealed by classical information theory tools as analogues of correlations between random variables. On the other hand,  entangled states have been traditionally considered as the only quantum-correlated class of states \cite{Werner89}, but this statement has recently proven to be misleading. Several features of separable (i.e. unentangled) states are incompatible with a purely classical description. To mention a few, an ensemble of separable nonorthogonal states cannot be discriminated perfectly \cite{nonlocwithoutent}, general separable states may have off-diagonal coherences in any product basis \cite{piani}, or, more practically, a measurement process on part of a composite system in a separable state may (and in general does) induce disturbance on the state of the complementary subsystems \cite{MID}. These are some genuine signatures of nonclassicality of correlations in the considered states, yet without any entanglement.

 A renewed attention towards the properties and the usefulness of such general quantum correlations in separable (and entangled) states has been triggered by the observation that, in mixed-state models of quantum computation (e.g. the so-called DQC1), such general quantum correlations may be at the root of a speed-up compared to classical scenario, despite the presence of zero or nearly vanishing entanglement \cite{dattabarbieriaustinchaves,dqc1discord}.
In general, it still remains an open issue whether such general quantum correlations are just related to the statistical properties of a state,  or represent truly some stronger, physical correlations of quantum nature that reduce to the entanglement in some cases, and go beyond it in general \cite{merali}.

It is clear that on pure bipartite states of arbitrary quantum systems, entanglement and quantum correlations are just synonyms. Both of them collapse onto the notion of lack of information about the system under scrutiny, when only a subsystem is probed. Quantitatively, this implies that any meaningful measure of entanglement or general quantum correlations should just reduce to some monotonic function of the marginal entropy of each reduced subsystem, when applied to pure bipartite states. The question becomes significantly more interesting for mixed bipartite states. One would expect to find, in general, an amount of quantum correlations that is no less than some valid entanglement monotone. In this paper we prove such an intuition to hold true for a particular choice of quantifiers of entanglement and quantum correlations, on arbitrary two-qubit states and on a relevant subclass of two-qudit states.

We recall that, in the last decade, a zoo of entanglement measures  (say ${\cal E}$ the amount of entanglement they aim to quantify) have been introduced \cite{PlenioVirmani}, and in a more recent drift several measures  have been proposed as well to evaluate the degree of general quantum correlations (say ${\cal Q}$) in composite systems \cite{OZ,HV,MID,altremisure,req,modi,dakic,genio}.  It seems reasonable to expect that
\begin{equation}
{\cal Q}  \geq  {\cal E}
\label{equestion}
 \end{equation}
 should hold for a {\it bona fide} chosen pair of quantifiers (see also \cite{nuovogenio}).
However, this claim turns out to be not mathematically fulfilled in some canonical cases. Selecting for instance two well-established entropic quantifiers such as the ``entanglement of formation'' \cite{eof} as an entanglement monotone, and the ``quantum discord'' \cite{OZ,HV} as a measure of quantum correlations, one finds that the latter can be greater as well as smaller than the former depending on the states, and no clear hierarchy can be established, even in the simple cases of two-qubit systems \cite{alqasimi} or two-mode Gaussian states \cite{AD_10}. An interesting study has recently succeeded in describing entanglement, classical and quantum correlations under a unified geometric picture \cite{modi}, by quantifying each type of correlations in terms of the smallest distance (according to the relative entropy) from the corresponding set of states without that type of correlations. For example, the amount of entanglement in a state $\rho$ is given by the relative-entropic distance between $\rho$ and its closest separable state, and it is called relative entropy of entanglement \cite{vedralrelent}. In this context, our expectation holds: the relative entropy of entanglement ${\cal E}_R$  is automatically smaller in general than the so-called relative entropy of quantumness ${\cal Q}_R$ \cite{req}, which in turn quantifies the minimum relative entropic-distance from the set of purely classically correlated states (a null-measure subset of the convex set of separable states \cite{acinferraro}). The latter measure ${\cal Q}_R$ has been recently interpreted operationally within an `activation' framework that recognizes the value of general quantum correlations as resources to generate entanglement with an ancillary system \cite{genio} (see also \cite{bruss,genioij}). Such a protocol is sufficiently general to let one define, in a natural way, quantumness measures ${\cal Q_E}$ associated to any proper entanglement monotone ${\cal E}$. In this way the question of the validity of \eq{equestion} becomes especially meaningful given the natural compatibility of the involved quantifiers \cite{nuovogenio}. However, there is a nontrivial optimization step required for the calculation of each ${\cal Q_E}$ that hinders the explicit computability of the desired resources.

 In this paper, we choose {\it computable} measures for entanglement and general quantum correlations. In the case of entanglement, we adopt the squared ``negativity'' ${\cal N}^2$ \cite{vidwer}, which is a measure of abstract algebraic origin, quantifying how much a bipartite state fails to satisfy the positivity of partial transpose (PPT) criterion for separability introduced by Peres and the Horodeckis \cite{PPT}. In the case of quantum correlations, we pick the ``geometric quantum discord'' $D_G$ \cite{dakic}, which measures (as suggested by the name) the minimum distance of a state from the set of classically correlated states, in terms of the squared Hilbert--Schmidt norm. Both measures are taken to be normalized between $0$ and $1$. Despite the very different origin and nature of these two measures, we prove that \eq{equestion} holds, namely $D_G \ge {\cal N}^2$, for arbitrary mixed states of two qubits.

We remark that both measures play key roles in the quantum correlation scenario, especially for their observability and usefulness in quantum information applications. In fact, the negativity is a popular entanglement measure, operationally related to the entanglement cost under PPT preserving operations \cite{pptcost}, and amenable of experimental estimation via quantitative entanglement witnesses (which provide measurable lower bounds to ${\cal N}$) \cite{quantwit}. On the other hand, the geometric discord, operationally interpreted in \cite{luofu}, also admits a tight lower bound $Q$ \cite{pirla} (which is by itself a faithful, observable quantifier of general quantum correlations), whose detection---not requiring complete state tomography---currently constitutes the optimal pathway to reveal and quantitatively estimate nonclassical correlations in quantum algorithms such as DQC1 mixed-state quantum computation \cite{dqc1discord}. In this respect, we show specifically that the chain $D_G\geq Q\geq {\cal N}^2$ holds on general two-qubit states (where the leftmost inequality is analytical \cite{pirla} and the rightmost one is corroborated by numerical simulations).

Furthermore, we prove that the inequality $D_G \ge {\cal N}^2$ extends to arbitrary pure, Werner \cite{Werner89} and isotropic states \cite{isot} of two qudits for any higher dimension $d$. We further provide numerical evidence that supports the validity of the inequality also in generic states of $2 \otimes 3$ systems. We then conjecture that $D_G \ge {\cal N}^2$ should hold for arbitrary mixed states of $d \otimes d'$ bipartite system. Our results demonstrate an interesting hierarchy between two apparently unrelated quantifiers of nonclassicality, for both of which closed formulas (and experimentally friendly detection schemes) are available on the classes of states considered here.

The fact that the geometric discord stands as a sharp upper bound on a computable measure of entanglement such as the (squared) negativity, is a worthwhile issue to impose a rigorous ordering of resources, for all those applications where the performance of a quantum information and communication primitive relies on the amount and the nature of nonclassical correlations between the involved parties \cite{dattacommun}.

%Our results are likely to stimulate further research towards a comprehensive grasping of the hierarchy and interplay between different aspects of quantum correlations in informationally-relevant systems and setups.

 The paper is organized as follows. Sec.~\ref{SecMeas} recalls the definitions of negativity and geometric discord. In Sec.~\ref{SecQub} we compare the two measures on arbitrary states of two qubits. In Sec.~\ref{SecQud} we extend our analysis to higher-dimensional systems.  We summarize our results and discuss future perspectives in Sec.~\ref{SecConcl}.

\section{Measures of entanglement and quantum correlations}\label{SecMeas}

\subsection{Negativity}

According to the PPT  criterion \cite{PPT}, if a state $\rho_{AB} \equiv \rho$ of a bipartite quantum system is separable, then the partially transposed matrix $\rho^{t_A}$ is still a valid density operator, namely it is positive semidefinite. In general, $\rho^{t_A}$ is defined as the result of the transposition performed on only one ($A$, in this case) of the two subsystems in some given basis. Even though the resulting $\rho^{t_A}$ does depend on the choice of the
transposed subsystem and on the transposition basis, the statement $\rho^{t_A}\ge0$ is  invariant under such
choices \cite{PPT}. For $2 \otimes 2$ and $2 \otimes 3$ mixed states \cite{PPT}, for arbitrary $d \otimes d'$ pure states, and for all Gaussian states of $1 \otimes n$ mode continuous variable systems \cite{Simonwerwolf}, the PPT criterion is a necessary and sufficient condition for separability and, at the same time, its failure reliably marks the presence of entanglement. In all the other cases, there exist states which can be entangled yet with a positive partial transpose: they are so-called bound entangled states, whose entanglement cannot be distilled by means of local operations and classical communications (LOCC) \cite{bound}.

On a quantitative level, the negativity of the partial transpose, or, simply, ``negativity'' ${\cal N}(\rho)$ \cite{zircone,vidwer} can be adopted as a valid, computable measure of (distillable) entanglement for arbitrary bipartite systems. The negativity of a quantum state $\rho$ of a bipartite $d \otimes d$ system can be defined as
\begin{equation}\label{nega}
{\cal N}(\rho) = \frac{1}{d-1} (\|\rho^{t_A}\|_1 -1)\,,
\end{equation}
where
\begin{equation}\label{norm1}
\|M\|_1=\text{Tr}|M| = \sum_i |m_i|\
\end{equation}
stands for the $1$-norm, or trace norm, of the matrix $M$ with eigenvalues $\{m_i\}$.
The quantity ${\cal N} (\rho)$ is proportional to the modulus of the sum of the negative
eigenvalues of $\rho^{t_A}$, quantifying the extent to which
the partial transpose fails to be positive.

 The negativity ${\cal N}$ is in general an easily computable entanglement measure, and it has been proven to be (along with its square ${\cal N}^2$)  convex and monotone under LOCC \cite{vidwer}. The squared negativity ${\cal N}^2$ satisfies a monogamy inequality on the sharing of entanglement for multiqubit systems \cite{ckwyongche}.

\subsection{Geometric quantum discord}

The ``geometric quantum discord'' $D_G$ has been recently introduced as a simple geometrical quantifier of general nonclassical correlations in bipartite quantum states \cite{dakic}. Let us suppose to have a bipartite system $AB$ in a state $\rho$ and to perform a local measurement on the subsystem $B$. Almost all (entangled or separable) states will be subject to some disturbance due to such a measurement \cite{acinferraro}. However, there is a subclass of states which is left unperturbed by at least one measurement: it is the class of the so-called ``classical-quantum'' states \cite{piani}, whose representatives have a density matrix of this form
\begin{eqnarray}\label{cq}
\rho = \sum _i  p_i \rho _{Ai}\otimes |i\rangle  \langle i | ,
\end{eqnarray}
where $p_i$ is a probability distribution, $\rho_{Ai}$ is the marginal density matrix of $A$ and  $\{|i\rangle\}$ is an orthonormal vector set.
 Letting $\Omega$ be the set of classical-quantum states,  and $\chi$ be a generic element of this set, the geometric discord $D_G$ is defined as  the distance between the state $\rho$ and the closest classical-quantum state. In the original definition \cite{dakic}, the (unnormalized) squared Hilbert--Schmidt distance is adopted. We employ here a normalized version of the geometric quantum discord for arbitrary mixed states $\rho$ of a  $d \otimes d$ quantum system,
 \begin{equation}\label{dgeom}
  D_G(\rho)=\frac{d}{d-1} \min_{\chi \in \Omega}  \|\rho -\chi \|_2^2\,.
  \end{equation}
where
\begin{equation}\label{norm2}
\|M\|_2=\sqrt{\text{Tr}(M M^\dagger)} = \sqrt{\sum_i m_i^2}\,,
\end{equation}
stands for the $2$-norm, or Hilbert--Schmidt norm, of the matrix $M$ with eigenvalues $\{m_i\}$.
The quantity $D_G (\rho)$ in \eq{dgeom} is normalized between $0$ (on classical-quantum states) and $1$ (on maximally entangled states $\rho=\ket{\psi}\!\bra{\psi}$, $\ket{\psi} = d^{-1/2} \sum_{j=0}^{d-1} \ket{j}\ket{j}$).

The geometric discord can be re-interpreted as the minimal disturbance, again measured according to the squared Hilbert--Schmidt distance, induced by any projective measurement $\Pi^B$ on subsystem $B$ \cite{luofu},
 $$ D_G(\rho)=\frac{d}{d-1} \min_{\Pi^B}  \|\rho -\Pi^B(\rho) \|_2^2\,.$$
We notice that the geometric discord is not symmetric under a swap of the two parties, $A \leftrightarrow B$.

The minimization involved in the definition of the geometric quantum discord can be solved exactly for arbitrary two-qubit states \cite{dakic} and pure two-qudit states \cite{luofu,luoMIL}, leading to computable formulas, as detailed in the following Sections. In the remainder of the paper, we will compare entanglement --- quantified by ${\cal N}^2$ --- and quantum correlations --- quantified by $D_G$. The latter will be shown to majorize the former.  We observe that picking the square of the negativity as entanglement measure is unconventional, yet necessary in this case: we want  to make a mathematically consistent comparison of measures, both acting quadratically on the eigenvalues of the involved matrices [compare Eqs.~(\ref{nega}) and (\ref{dgeom})].

\section{Geometric discord versus negativity in two-qubit systems}\label{SecQub}

The main result of this Section is the following.

  \begin{Theorem}\label{teo1}For every  general two-qubit state $\rho$, the geometric quantum discord is always greater or equal than the squared negativity,
  \begin{equation}\label{tesi}
  D_G(\rho) \geq {\cal N}^2(\rho)\,.
  \end{equation}
  \end{Theorem}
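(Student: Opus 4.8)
The plan is to work with the ``minimal disturbance'' form of the geometric discord recalled just after Eq.~(\ref{dgeom}): for two qubits, $D_G(\rho)=2\min_{\Pi^B}\|\rho-\Pi^B(\rho)\|_2^2$, the minimum running over projective measurements on $B$. I would prove the \emph{pointwise} bound $\N^2(\rho)\le 2\|\rho-\Pi^B(\rho)\|_2^2$ for every such $\Pi^B$ and then take the minimum over $\Pi^B$ on the right-hand side to get Eq.~(\ref{tesi}). Fix an orthonormal basis $\{\ket{0},\ket{1}\}$ of $\mathcal{H}_B$ and decompose $\rho=\sum_{k,l=0}^{1}A_{kl}\otimes\ketbra{k}{l}$ with $A_{kl}=\bra{k}\rho\ket{l}$ operators on $\mathcal{H}_A$; positivity of $\rho$ forces $A_{00},A_{11}\succeq 0$ and $A_{10}=A_{01}^\dagger$. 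A one-line computation gives $\Pi^B(\rho)=A_{00}\otimes\ketbra{0}{0}+A_{11}\otimes\ketbra{1}{1}$, hence $\|\rho-\Pi^B(\rho)\|_2^2=2\|A_{01}\|_2^2$, so the entire inequality reduces to controlling the negativity through the single off-diagonal block $A_{01}$.

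The core estimate is a variational lower bound on the least eigenvalue of the partial transpose. Since partial transposition acts blockwise, $\rho^{t_A}=\sum_{k,l}A_{kl}^{T}\otimes\ketbra{k}{l}$, and transposition preserves positive semidefiniteness, I would evaluate $\rho^{t_A}$ on a normalised test vector $\ket{v}=\ket{\phi_0}\ket{0}+\ket{\phi_1}\ket{1}$: the diagonal contributions $\bra{\phi_0}A_{00}^{T}\ket{\phi_0}$ and $\bra{\phi_1}A_{11}^{T}\ket{\phi_1}$ are nonnegative and can be discarded, leaving $\bra{v}\rho^{t_A}\ket{v}\ge 2\,\mathrm{Re}\,\bra{\phi_0}A_{01}^{T}\ket{\phi_1}\ge-2\|\phi_0\|\,\|\phi_1\|\,\|A_{01}^{T}\|_\infty\ge-\|A_{01}\|_\infty$, using $2\|\phi_0\|\|\phi_1\|\le\|\phi_0\|^2+\|\phi_1\|^2=1$ and that transposition leaves singular values unchanged. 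Thus every eigenvalue of $\rho^{t_A}$ is $\ge-\|A_{01}\|_\infty$. I would then invoke the standard fact that the partial transpose of a two-qubit state has at most one negative eigenvalue; together with Eq.~(\ref{nega}) at $d=2$ this gives $\N(\rho)\le 2\|A_{01}\|_\infty$, and therefore $\N^2(\rho)\le 4\|A_{01}\|_\infty^2\le 4\|A_{01}\|_2^2=2\|\rho-\Pi^B(\rho)\|_2^2$, where the middle step is merely ``operator norm $\le$ Hilbert--Schmidt norm''. Minimising over $\Pi^B$ closes the proof.

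I expect the one genuinely delicate ingredient to be the step that upgrades a bound on the \emph{single} most negative eigenvalue to a bound on the whole negativity: this rests on the ``at most one negative eigenvalue'' property of $\rho^{t_A}$, which is specific to $2\otimes2$ and is exactly why Theorem~\ref{teo1} is restricted to two qubits while the qudit cases must be handled separately. It is also worth verifying that $\|A_{01}\|_\infty\le\|A_{01}\|_2$ is the only slack in the chain, so that the bound is tight: on a pure state $A_{01}$ has rank one in the Schmidt basis, $\|A_{01}\|_\infty=\|A_{01}\|_2$, and all the inequalities collapse to equalities, recovering $D_G(\rho)=\N^2(\rho)$ as anticipated. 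Everything else is routine bookkeeping of the variational estimate together with a citation (or short proof) of the one-negative-eigenvalue lemma.
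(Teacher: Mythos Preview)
Your proof is correct and takes a genuinely different route from the paper's. The paper works at the spectral level: it invokes the identity $D_G=2(\tr[\rho^2]-\tr[\bar\chi^2])$ for the optimal classical--quantum state $\bar\chi$, applies the Hoffman--Wielandt theorem to bound $\|\rho^{t_A}-\bar\chi\|_2^2$ from below by $\sum_i|\lambda_i-\varsigma_i|^2$, derives the constraint $\sum_i\varsigma_i^2=\sum_i\lambda_i\varsigma_i$, and then solves a Lagrange-multiplier optimization over the eigenvalue vectors $(\lambda_i),(\varsigma_i)$ to obtain $\sum_i|\lambda_i-\varsigma_i|^2\ge 2|\lambda_4|^2$. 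You instead fix an \emph{arbitrary} projective measurement $\Pi^B$, read off $\|\rho-\Pi^B(\rho)\|_2^2=2\|A_{01}\|_2^2$ from the block decomposition on $B$, and prove by a direct variational estimate that $\lambda_{\min}(\rho^{t_A})\ge -\|A_{01}\|_\infty$; the single-negative-eigenvalue property then gives $\N(\rho)\le 2\|A_{01}\|_\infty\le 2\|A_{01}\|_2$, and minimising over $\Pi^B$ closes the argument.

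Your approach is more elementary (no Hoffman--Wielandt, no constrained optimisation) and actually proves the stronger pointwise statement $\N^2(\rho)\le 2\|\rho-\Pi^B(\rho)\|_2^2$ for \emph{every} $\Pi^B$, not just the optimal one. The paper's spectral route, on the other hand, is closer in spirit to majorisation-type arguments and keeps track of the full eigenvalue profiles of $\rho^{t_A}$ and $\bar\chi$, which may be more informative if one wants to characterise near-extremal states. Both proofs hinge on the two-qubit-specific fact that $\rho^{t_A}$ has at most one negative eigenvalue; this is the common obstruction to extending the argument verbatim to higher dimensions, and you are right to flag it as the one delicate ingredient.
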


Let us review the formulas needed to evaluate the two chosen measures for generic two-qubit states.

 The geometric discord $D_G$ admits  an explicit closed expression for two-qubit states \cite{dakic}. First, one needs to express the $4\times 4$ density matrix $\rho$ of a two-qubit state in the so-called Bloch basis (or $R$-picture) \cite{verstraete}:
 \begin{eqnarray}\label{bloch}
  \rho&=& \frac14 \sum_{i,j=0}^3 R_{ij} \sigma_i \otimes \sigma_j \\
  \nonumber &=& \frac 14\bigg(\mathbb{I}_{4}+\sum_{i=1}^3 x_i\sigma_i \otimes \mathbb{I}_{2} \\ \nonumber
  & & \quad +\sum_{j=1}^3 y_j \mathbb{I}_{2}\otimes \sigma_j+\sum_{i,j=1}^3 t_{ij} \sigma _i\otimes\sigma_j\bigg),
  \end{eqnarray}
where $R_{ij}=\text{Tr}[\rho(\sigma_i\otimes \sigma_j)]$, $\sigma_0=\mathbb{I}_{2}$, $\sigma _i$ ($i=1,2,3$) are the Pauli matrices, $\vec{x}=\{x_i\},\vec{y}=\{y_i\}$ are the three-dimensional Bloch column vectors associated to the subsystems $A,B$, and  $t_{ij}$ denote the elements of the correlation matrix $T$.  Then, following \cite{dakic}, the normalized geometric discord $D_G$, \eq{dgeom}, takes the form
 \begin{equation}\label{dgformula}
 D_G(\rho)= \frac 12(\|\vec y\|^2 + \|T\|_2^2 -k),
  \end{equation}
  with $k$ being the largest eigenvalue of the matrix $\vec y {\vec  y}^t+  T^t T$.
The expression in \eq{dgformula} can be also recast as the solution to a variational problem  \cite{luofu}; namely,  for two qubits,
 \begin{equation}\label{dgformula2}
 D_G(\rho)= 2 \left[\text{Tr}(C^t C) - \max_A \text{Tr}(A C^t C A^t)\right]\,,
  \end{equation}
where $C=R/2$ and the maximum is taken over all $2\times 4$ isometries $A = \frac{1}{\sqrt2}\left(
                                                                                              \begin{array}{cc}
                                                                                                1 & \vec a \\
                                                                                                1 & -\vec a \\
                                                                                              \end{array}
                                                                                            \right)
$, with $\vec a$ a three-dimensional unit vector.

Concerning the negativity ${\cal N}$, \eq{nega}, it is known that a two-qubit state $\rho$ is separable if and only if $\rho^{t_A} \ge 0$ \cite{PPT}, and, for entangled two-qubit states $\rho$, at most one eigenvalue of the partial transpose $\rho^{t_A}$ can be negative \cite{verstraete}. Denoting by $\{\lambda_i\}$  the eigenvalues of $\rho^{t_A}$ in decreasing order, for two-qubit entangled states we have $\lambda_1 \ge \lambda_2 \ge \lambda_3 \ge 0\ge \lambda_4$ and  the negativity of $\rho$ takes the form \cite{vidwer}
\begin{equation}\label{negaformula}
{\cal N}(\rho)= \|\rho ^{t_A}\|_1-1=2 |\lambda_4| \,,
\end{equation}
 while for separable states ($\lambda_4 \ge 0$) one has
${\cal N}(\rho)=0$.

We first compare entanglement and quantum correlations in the simple instance of pure two-qubit states $\rho^p=\ket{\psi}\!\bra{\psi}$. Up to local unitary operations (which leave correlations invariant), a two-qubit pure state can be written in its Schmidt decomposition, corresponding to a density matrix of the form
\begin{equation}\label{rhopure}
\rho^p=
\left(
\begin{array}{cccc}
 \frac{1}{2} \left(\sqrt{1-\mathcal{N}^2}+1\right) & 0 & 0 & \frac{\mathcal{N}}{2} \\
 0 & 0 & 0 & 0 \\
 0 & 0 & 0 & 0 \\
 \frac{\mathcal{N}}{2} & 0 & 0 & \frac{1}{2} \left(1-\sqrt{1-\mathcal{N}^2}\right)
\end{array}
\right)
\end{equation}
It is straightforward to show that in this case,
\begin{equation}
D_G(\rho^p)={\cal N}^2(\rho^p) \equiv S_L(\rho^p_A)\,,\label{pured}
\end{equation}
where $S_L(\rho^p_A)=4 \det(\rho^p_A)$ denotes the marginal linear entropy of one subsystem in its reduced state. As expected, entanglement and quantum correlations correctly coincide for pure two-qubit states, and specifically the two chosen measures (geometric discord and squared negativity) collapse onto the very same quantifier of local lack of purity.

For general two-qubit mixed states,  our intuition dictates that the amount of quantum correlations should exceed entanglement. This is  formalized in Theorem~\ref{teo1}, which we are now ready to prove.

  \begin{proof}
  We focus on the case of entangled states, as \eq{tesi} trivially holds when $\rho$ is separable.  \\
     First,   we have a look at the original formulation of geometric discord in \cite{dakic}: the closest classical-quantum state $\bar{\chi}$ that achieves the minimum of the Hilbert--Schmidt norm $||\rho - \chi||_2^2$ is  such that  $\text{Tr}[\rho{\bar{\chi}}]=\text{Tr}[{\bar{\chi}}^2]$. Thus, we can rewrite \eq{dgeom} as
\begin{eqnarray}\label{semp}
D_G&=&2\ \min_{ \chi \in \Omega}\|\rho - \chi\|_2^2 = 2\left(\text{Tr}[\rho^2]-\text{Tr}[{\bar{\chi}}^2]\right)\nonumber\\
&=&2\left(\text{Tr}[\rho^{t_A\ 2}]-\text{Tr}[{\bar{\chi}}^2]\right).
\end{eqnarray}
 Then, denoting (as before) by  ${\bf \lambda}=\{\lambda_i\}$ the vector of eigenvalues of  $\rho^{t_A}$ in decreasing order ($\lambda_1 \geq\lambda_2\geq\lambda_3\geq 0\geq \lambda_4$), and similarly denoting by ${\bf \varsigma}=\{\varsigma_i\}$ the vector of eigenvalues of ${\bar{\chi}}$ ($\varsigma_1 \geq\varsigma_2\geq\varsigma_3\geq \varsigma_4\geq 0$), recalling that  the Hilbert--Schmidt norm is invariant under partial transposition \cite{geover}, we obtain $\sum_{i=1}^4\varsigma_i^2= \text{Tr}[\rho^{t_A}{\bar{\chi}}]$. We can further exploit the Hoffman--Wielandt theorem \cite{horn}, which implies that
 \begin{equation}\label{resparz}
 \|\rho^{t_A}-{\bar{\chi}}\|_2^2 \geq \sum_{i=1}^4|\lambda_i-\varsigma_i|^2 =\sum_{i=1}^3 |\lambda_i-\varsigma_i|^2 + (|\lambda_4|+\varsigma_4)^2.
\end{equation}
   Thus, from (\ref{semp}) and (\ref{resparz})  we have
   \begin{eqnarray}\label{ermetodo}
   \sum_{i=1}^4 \varsigma_i^2 = \sum_{i=1}^4  \lambda_i\varsigma_i.
   \end{eqnarray}
    Now, let  us consider the function
   \begin{eqnarray}
   f({\bf \lambda},{\bf \varsigma})=\sum_{i=1}^3\lambda_i|\lambda_i-\varsigma_i| -|\lambda_4|(|\lambda_4|-\varsigma_4);
   \end{eqnarray}
   it is easy to see  that, performing an optimization by the Lagrange multipliers method, the minimum of $f$ keeping fixed $|\lambda_4|$ and $\varsigma_4$ (say $f'$) is reached when $\lambda_1=\lambda_2=\lambda_3=\frac{(1+|\lambda_4|)}3$ and $\varsigma_1=\varsigma_2=\varsigma_3=\frac{1-\varsigma_4}{3}$.  Hence, we have
\[
   f'(|\lambda_4|,\chi_4)=(1+|\lambda_4|)\left(\frac{1+|\lambda_4|}3-\frac{1-\varsigma_4}3\right)-|\lambda_4|(|\lambda_4|-\varsigma_4).
  \]
   Furthermore, optimizing over  $\varsigma_4$ we obtain $f''$, which is the minimum of $f$ at fixed $|\lambda_4|$ (i.e. at fixed negativity):
    \begin{eqnarray}
   f''(|\lambda_4|)= \frac{1+|\lambda_4|}3 -|\lambda_4|\geq 0.
   \end{eqnarray}
Finally, the last inequality implies
$\sum_{i=1}^3\lambda_i|\lambda_i-\varsigma_i| \geq |\lambda_4|(|\lambda_4|-\varsigma_4)$, i.e.,
\[
 \sum_{i=1}^3\lambda_i|\lambda_i-\varsigma_i| +|\lambda_4|(|\lambda_4|+\varsigma_4)\geq 2|\lambda_4|^2,
 \]
and thanks to \eq{ermetodo} this yields
\begin{equation}
\sum_{i=1}^4|\lambda_i-\varsigma_i|^2 \geq  2|\lambda_4|^2\,,
\end{equation}
which is equivalent to \eq{tesi}, thus demonstrating the claim.
This concludes the proof of Theorem \ref{teo1} for all two-qubit mixed states.
  \end{proof}

To illustrate the comparison between geometric discord and squared negativity, we plot in Fig.~\ref{figqubits} the physical region filled by $10^5$ randomly generated two-qubit states in the space  $D_G$ versus ${\cal N}^2$. Along with the (red online) lower bound emerging from Theorem \ref{teo1},  saturated by pure states [\eq{rhopure}] for which $D_G = {\cal N}^2$, we notice the existence of an upper bound as well on $D_G$ at fixed negativity. This shows that the quantum correlations in excess of entanglement or, in general, beyond entanglement are somehow constrained. Two-qubit states saturating the (green online) upper bound can be sought within the class of rank-two $X$-shaped density matrices of the form
\begin{equation}
\label{X2}
\rho^X=\left(
\begin{array}{cccc}
 a & 0 & 0 & \sqrt{a d} \\
 0 & b & \sqrt{b c} & 0 \\
 0 & \sqrt{b c} & c & 0 \\
 \sqrt{a d} & 0 & 0 & d
\end{array}
\right)\,,
\end{equation}
where $d=1-a-b-c$ and $b=\big[2-2 a-2 c+2 \big(-1+6 a-7 a^2+6 c-18 a c-7 c^2+4 \sqrt{2} \sqrt{a c (-1+2 a+2 c)^2}\big)^{\frac12}\big]/4$, with $a$ and $c$
varying in the parameter range $0 \le a,c \le 1/2,\, -1+6 a-7 a^2+6 c-18 a c-7 c^2+4 \sqrt{2} \sqrt{a c} |2 a+2 c-1| \ge 0$. The remaining optimization of $D_G$ at fixed ${\cal N}^2$ can be efficiently done numerically.

\begin{figure*}[t]
\subfigure[]{\includegraphics[width=8.2cm]{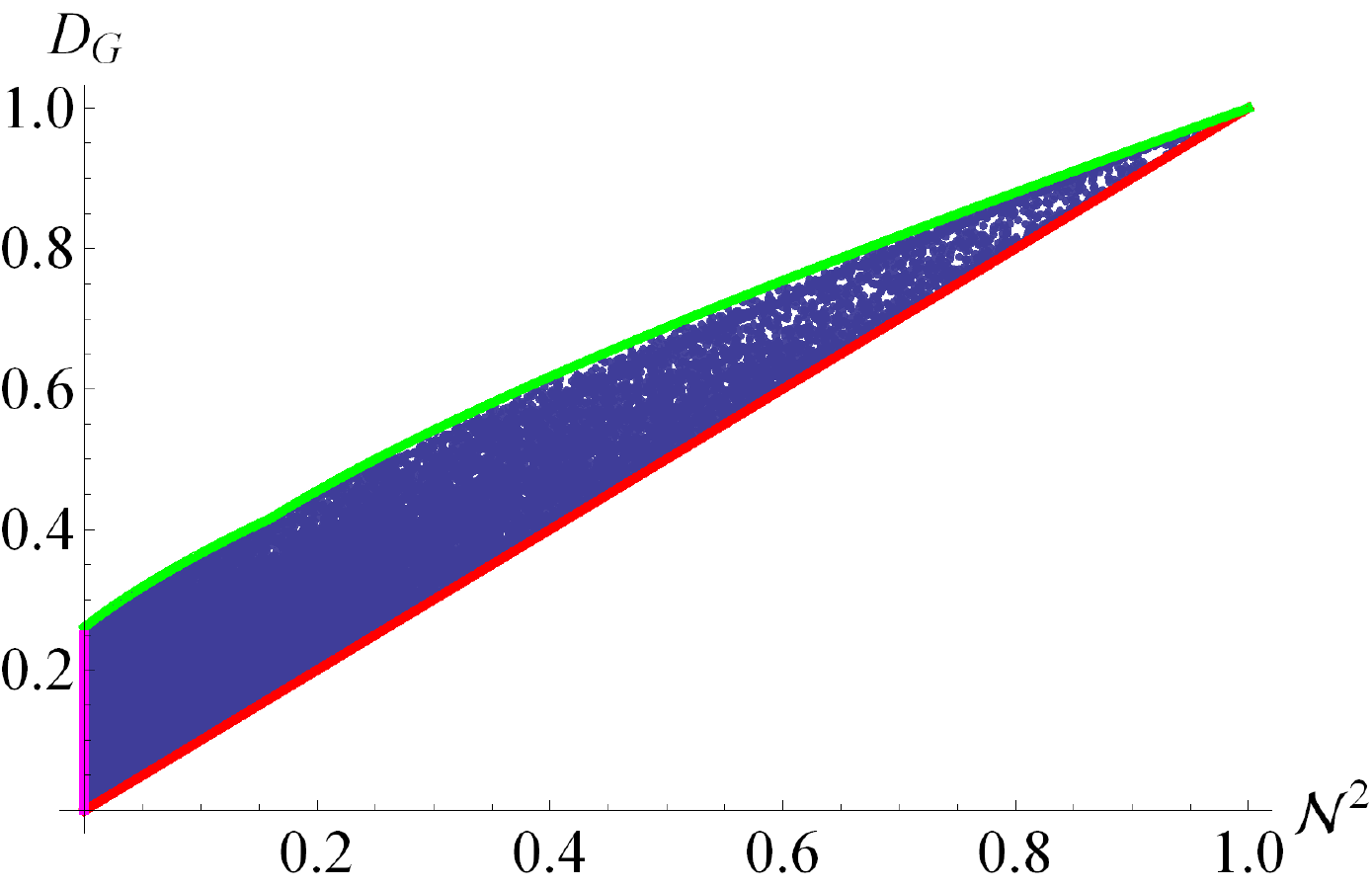}\label{figqubits}}\hspace*{1.3cm}
\subfigure[]{\includegraphics[width=8.2cm]{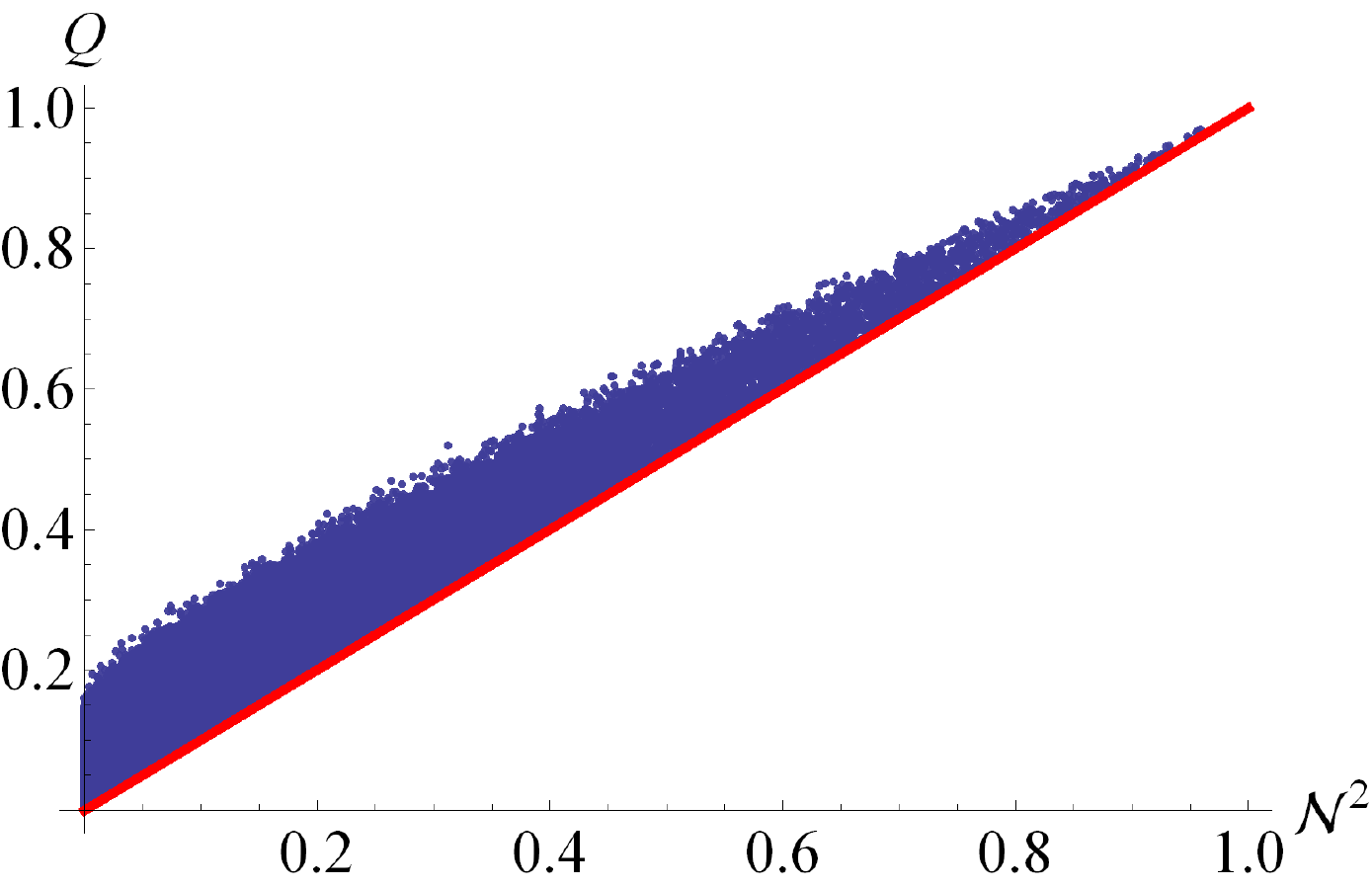}\label{qvsneg}}
\caption{(Color online) Geometric quantum discord $D_G$ (a) and its observable lower bound $Q$ (b) versus squared negativity ${\cal N}^2$ for $10^5$ randomly generated states of two qubits. The lower boundary (red online) in both plots accommodates pure states. In panel (a), the upper boundary (green online) can be saturated by a subclass of rank-two states of the form \eq{X2}; while the side (magenta online) vertical line at ${\cal N}^2=0$ is filled by separable states yet with nonzero quantum correlations, which reach up to the value $D_G = 1/4$ on states of the form \eq{sepopt}. All the quantities plotted are dimensionless. }\label{fig1ab}
\end{figure*}

In the limiting case of separable two-qubit states, ${\cal N}(\rho^{sep})=0$, the maximum value of the (normalized) geometric discord can be analytically found to be  \cite{notevlatko} \begin{equation}
D_G(\rho^{sep}_{opt})=\frac14\,.
 \end{equation}
 This is achieved by imposing the edge of separability, $\lambda_4=0$, that corresponds to $a d = b c$ in \eq{X2}. The maximum $D_G$ is then reached e.g. for $a=c=\frac{1}{8} \left(2+\sqrt{2}\right)$. Notice that the corresponding state $\rho^{sep}_{opt}$,
 \begin{equation}
 \label{sepopt}
\rho^{sep}_{opt}= \left(
\begin{array}{cccc}
 \frac{1}{8} \left(2+\sqrt{2}\right) & 0 & 0 & \frac{1}{4 \sqrt{2}} \\
 0 & \frac{1}{8} \left(2-\sqrt{2}\right) & \frac{1}{4 \sqrt{2}} & 0 \\
 0 & \frac{1}{4 \sqrt{2}} & \frac{1}{8} \left(2+\sqrt{2}\right) & 0 \\
 \frac{1}{4 \sqrt{2}} & 0 & 0 & \frac{1}{8} \left(2-\sqrt{2}\right)
\end{array}
\right),
\end{equation}
 upon swapping the subsystems $A$ and $B$, becomes of the classical-quantum form of \eq{cq}, i.e., a state with zero $D_G$. This suggests that the maximum geometric discord for general two-qubit separable states, is obtained on an extremally asymmetric state (the marginal state ${\rho^{sep}_{opt}}_A$ is maximally mixed, while the marginal state of subsystem $B$ is quasi-pure, $\text{Tr}\big({{\rho^{sep}_{opt}}_B}^2\big)=3/4$), that displays no signature of quantum correlations at all if subsystem $A$ rather than $B$ is probed by local measurements. The example in \eq{sepopt} is just one of an entire class of two-qubit states that enjoy the same property \cite{genioij}.

 The full allowed range $0 \le D_G \le 1/4$ for the geometric discord of separable states (magenta line in Fig.~\ref{figqubits}) can be spanned for instance by mixtures of the form $\rho^{sep}_p = p \rho^{sep}_{opt} + (1-p) I/4$, with $0 \le p \le 1$, for which $D_G(\rho^{sep}_p)=p^2/4$.

We can refine the hierarchy proven in this Section by taking into account the observable measure of quantum correlations $Q$ introduced in \cite{pirla}. In particular, for arbitrary two-qubit states this quantity takes the form of a state-independent function of the density matrix elements given by
\begin{eqnarray}
Q=\frac 23\left(\text{Tr}[S]-\sqrt{6\text{Tr}[S^2]-2(\text{Tr}[S])^2}\right),
\end{eqnarray}
where $S=\frac 14( \vec{y}\vec{y}^t+T^tT)$. We have shown in \cite{pirla} how to recast $Q$ in terms of observables that can be measured experimentally via simple quantum circuits. We also proved that $Q$ is a tight lower bound to the geometric discord, i.e. $D_G\geq Q$, where the inequality is saturated for pure states and $Q=0\iff D_G =0$. In Fig.~\ref{qvsneg} we plot $Q$ versus the squared negativity: numerics confirm that this novel quantity is still an upper bound to ${\cal N}^2$. Therefore, the following hierarchical ordering is satisfied for all two-qubit states: $D_G \geq Q\geq {\cal N}^2$, while all the quantifiers become equal for pure states.

\section{Geometric discord versus negativity in higher-dimensional systems}\label{SecQud}

Here we provide extensions of the results of the previous Section to $d \otimes d$ and $d \otimes d'$ systems.

\subsection{Pure $\boldsymbol{d \otimes d}$ states}

\begin{figure*}[t]
\includegraphics[width=16cm]{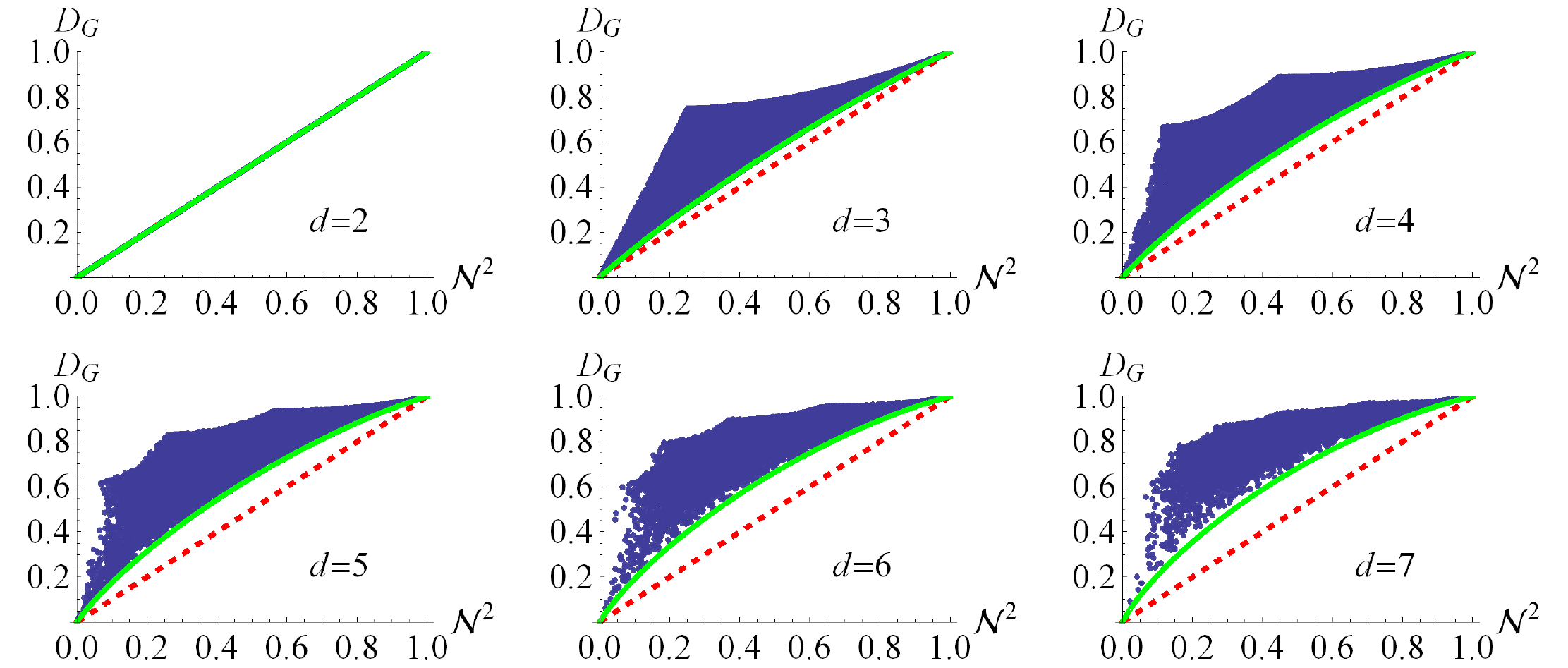}
\caption{(Color online) Geometric quantum discord versus squared negativity for $3 \times 10^4$ (per panel) randomly generated states pure states of two qudits with $d=2,\ldots,7$. The two measures coincide for $d=2$ (pure two-qubit states). In general, the line (dashed red online) $D_G={\cal N}^2$ is not attainable for intermediate values of both measures, while a tighter lower bound (solid green online) on $D_G$ exists at fixed negativity, given by \eq{lboundd}. Such a bound is saturated by states with Schmidt decomposition as in \eq{saturi}. The upper bound on $D_G$ at fixed negativity is more structured. Notice that the plots in this Figure can be also interpreted as the span of the pair of entanglement measures $\tau_2$ \cite{zik2} versus ${\cal N}^2$ \cite{vidwer} for two-qudit pure states.   All the quantities plotted are dimensionless. }\label{figqudits}
\end{figure*}

We first generalize Theorem \ref{teo1} to arbitrary pure states of two qudits. Namely, we prove the following.

  \begin{Theorem}\label{teo2}For every  pure two-qudit state $\ket{\psi} \in {\mathbb{C}^d \otimes \mathbb{C}^d}$, the geometric quantum discord is always greater or equal than the squared negativity,
  \begin{equation}\label{tesi2d}
  D_G(\psi) \geq {\cal N}^2(\psi)\,.
  \end{equation}
  \end{Theorem}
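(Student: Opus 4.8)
The plan is to substitute closed forms for both sides of \eqref{tesi2d} evaluated on the Schmidt decomposition of $\ket{\psi}$ and then recognise the remaining inequality as a Cauchy--Schwarz estimate. Write $\ket{\psi}=\sum_{i=1}^d\sqrt{\mu_i}\,\ket{i}\ket{i}$ with $\mu_i\ge 0$, $\sum_i\mu_i=1$. First I would compute the negativity: the partial transpose of $\rho=\ketbra{\psi}{\psi}$ is block diagonal, carrying eigenvalue $\mu_i$ on each $\ket{i}\ket{i}$ and eigenvalues $\pm\sqrt{\mu_i\mu_j}$ on each sector $\mathrm{span}\{\ket{i}\ket{j},\ket{j}\ket{i}\}$, $i<j$. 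Hence $\|\rho^{t_A}\|_1=\sum_i\mu_i+2\sum_{i<j}\sqrt{\mu_i\mu_j}=\big(\sum_i\sqrt{\mu_i}\big)^2$ and $\mathcal{N}(\psi)=\tfrac{1}{d-1}\big[\big(\sum_i\sqrt{\mu_i}\big)^2-1\big]=\tfrac{2}{d-1}\sum_{i<j}\sqrt{\mu_i\mu_j}$.

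Second, I would establish the closed formula for the geometric discord of a pure state [cf.\ \cite{luofu,luoMIL}]. Starting from $D_G(\rho)=\tfrac{d}{d-1}\min_{\Pi^B}\|\rho-\Pi^B(\rho)\|_2^2$ and using that $\Pi^B$ acts as an orthogonal projection on Hilbert--Schmidt space, so that $\|\rho-\Pi^B(\rho)\|_2^2=\text{Tr}[\rho^2]-\text{Tr}[\Pi^B(\rho)^2]$, a direct computation on $\rho=\ketbra{\psi}{\psi}$ gives $\text{Tr}[\Pi^B(\rho)^2]=\sum_k\big(\sum_i\mu_i P_{ki}\big)^2$, where $P_{ki}=|\langle i|e_k\rangle|^2$ is the doubly stochastic matrix built from the measured orthonormal basis $\{\ket{e_k}\}$ of $B$. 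The vector $P\vec{\mu}$ is majorised by $\vec{\mu}$ and $\vec{q}\mapsto\sum_k q_k^2$ is Schur-convex, so the maximum of $\text{Tr}[\Pi^B(\rho)^2]$ is attained for $P$ a permutation, that is for the Schmidt-basis measurement; this yields $\min_{\Pi^B}\|\rho-\Pi^B(\rho)\|_2^2=1-\sum_i\mu_i^2$ and $D_G(\psi)=\tfrac{d}{d-1}\big(1-\sum_i\mu_i^2\big)=\tfrac{2d}{d-1}\sum_{i<j}\mu_i\mu_j$.

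With both quantities available, \eqref{tesi2d} reduces, after multiplication by the positive factor $(d-1)^2/4$, to
\[
\binom{d}{2}\sum_{i<j}\mu_i\mu_j\ \ge\ \Big(\sum_{i<j}\sqrt{\mu_i\mu_j}\Big)^{2},
\]
which is Cauchy--Schwarz applied to the all-ones vector and to $(\sqrt{\mu_i\mu_j})_{i<j}$, both indexed by the $\binom{d}{2}$ unordered pairs of $\{1,\dots,d\}$; this proves Theorem~\ref{teo2}. The equality case of Cauchy--Schwarz forces all $\sqrt{\mu_i\mu_j}$ with $i<j$ to coincide, hence either at most one $\mu_i$ is nonzero (product states, $D_G=\mathcal{N}^2=0$) or all $\mu_i=1/d$ (maximally entangled states, $D_G=\mathcal{N}^2=1$); for $d\ge 3$ this explains why the line $D_G=\mathcal{N}^2$ is reached only at the two extreme points and a strictly tighter bound holds in between, consistently with Fig.~\ref{figqudits}.

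The one non-routine ingredient is the evaluation of $D_G$ on pure states, that is the optimality of the Schmidt-basis von Neumann measurement (the majorisation/Schur-convexity step above, which may also be quoted directly from \cite{luofu,luoMIL}); I expect this to be the main point requiring care. Once that closed form is granted, the inequality $D_G\ge\mathcal{N}^2$ follows from the one-line Cauchy--Schwarz estimate, so --- unlike the Hoffman--Wielandt argument needed for mixed two-qubit states in Theorem~\ref{teo1} --- the pure two-qudit case is comparatively soft.
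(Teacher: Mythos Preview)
Your proof is correct and follows the same overall structure as the paper's: reduce both $D_G$ and $\mathcal{N}$ to closed forms in the Schmidt coefficients and then verify the resulting scalar inequality. The difference is in how that last inequality is handled. The paper quotes the estimate
\[
4\sum_{j>i}\alpha_i\alpha_j \;\ge\; \frac{2}{d(d-1)}\Big[\Big(\sum_i\sqrt{\alpha_i}\Big)^2-1\Big]^2
\]
from \cite{qudits} as a black box, whereas you rewrite the right-hand side via $(\sum_i\sqrt{\alpha_i})^2-1=2\sum_{i<j}\sqrt{\alpha_i\alpha_j}$ and recognise the inequality as a one-line Cauchy--Schwarz on the $\binom{d}{2}$ pairs. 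So your argument is not a different route but a strictly more self-contained version of the paper's, with the added bonuses that (i) you supply a short majorisation/Schur-convexity justification for the pure-state $D_G$ formula that the paper simply imports from \cite{luofu,luoMIL}, and (ii) your equality analysis pins down exactly why the bound is tight only at the endpoints for $d\ge 3$, matching the discussion surrounding Fig.~\ref{figqudits}.
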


 \begin{proof}
 Any pure state $\ket{\psi} \in \mathbb{C}^d \otimes \mathbb{C}^d$ can be written without loss of generality in the Schmidt decomposition
\begin{equation}
\label{schd}
{\ket{\psi}} = \sum_{j=0}^{d-1} \sqrt{\alpha_j} \ket{j}\ket{j}\,,
\end{equation}
where the Schmidt coefficients are probability amplitudes, $\sum_j \alpha_j = 1$.

The geometric discord [\eq{dgeom}] can be computed in this case following Luo and Fu \cite{luofu,luoMIL}. The closest classical state to $\ket{\psi}$, entering the definition [\eq{dgeom}], turns out to be the completely uncorrelated state $\rho^{\otimes} = \rho_A \otimes \rho_B$, obtained as the tensor product of the marginal states $\rho_{A} = \text{Tr}_{B} (\ket{\psi}\!\bra{\psi})$ and $\rho_{B} = \text{Tr}_{A} (\ket{\psi}\!\bra{\psi})$. This implies
\begin{equation}\label{dgd}
D_G(\psi) = \frac{d}{d-1}\left(1-\sum_i \alpha_i^2\right) = \frac{2d}{d-1} \sum_{j>i}\alpha_i\alpha_j\,.
\end{equation}

Meanwhile, the negativity [\eq{nega}] is given by \cite{vidwer}
   \begin{eqnarray}\label{nd}
   {\cal N}(\psi)&=&\frac{1}{d-1}\left[\left(\sum_i \sqrt{\alpha_i}\right)^2 -\sum_i \alpha_i\right]\nonumber\\
   &=&\frac{1}{d-1}\left[\left(\sum_i \sqrt{\alpha_i}\right)^2-1\right].
   \end{eqnarray}
   We know from \cite{qudits} that the following inequality holds:
   \begin{eqnarray}
  4\sum_{j>i}\alpha_i\alpha_j \geq  \frac 2{d(d-1)}\left[\left(\sum_i \sqrt{\alpha_i}\right)^2-1\right]^2,
   \end{eqnarray}
therefore we obtain
  \begin{eqnarray}
  2 \frac{d}{d-1}\sum_{j>i}\alpha_i\alpha_j \geq  \frac 1{(d-1)^2}\left[\left(\sum_i \sqrt{\alpha_i}\right)^2-1\right]^2.
   \end{eqnarray}
The left side is the normalized geometric discord, while on the right we have the normalized squared negativity.
  \end{proof}

We have already seen that for $d=2$, the two measures $D_G$ and ${\cal N}^2$ indeed coincide on pure states. However, for any $d>2$, the geometric discord is in general strictly larger than the negativity. This seems to go against the expectation that quantum correlations should reduce to entanglement on pure states. In fact, $D_G$ does reduce to an entanglement measure on general two-qudit pure states, but such a measure is in general different from the squared negativity for $d \ge 3$. The pure-state entanglement monotone that takes the very same expression as in \eq{dgd} is a particular coefficient $\tau_2$ of  the characteristic polynomial of the nontrivial block of the Gram matrix of pure two-qudit states (see \cite{zik2} for details). Such a measure has not been studied for mixed states, and it is an interesting (yet technically challenging) open problem to see whether  the hierarchy $D_G \ge \tau_2$ holds for general two-qudit mixed states beyond $d=2$.

\begin{figure*}[t]
\includegraphics[width=17.8cm]{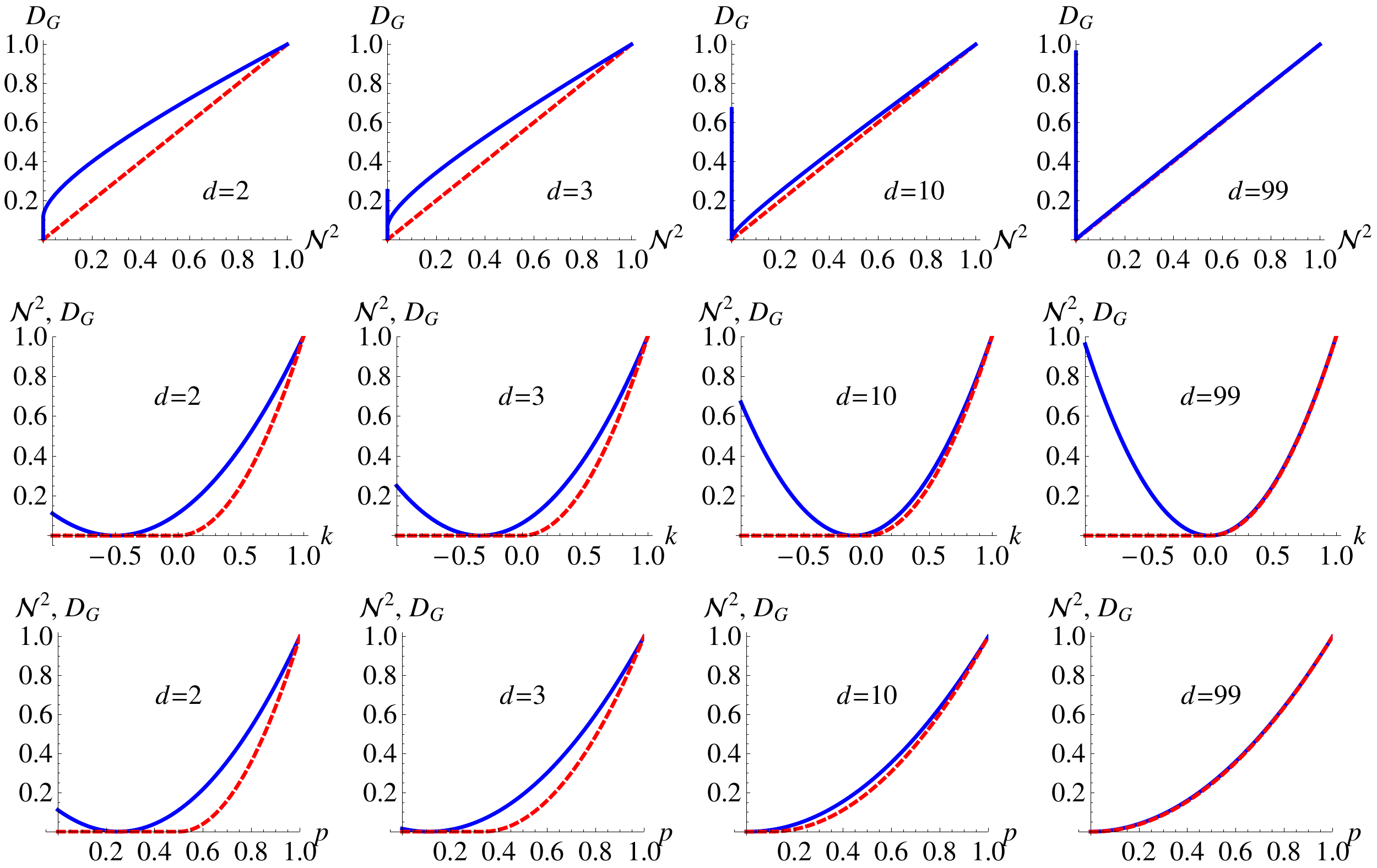}
\caption{(Color online) Top row: Geometric quantum discord versus squared negativity (solid blue line) [Eq.~(\ref{dgnweriso})] for $d \otimes d$ Werner states with dimensions $d=2,3,10,99$ (from left to right); the dashed red line of equation $D_G={\cal N}^2$ is just a guide to the eye. The corresponding plots for isotropic states are identical, apart from the extra vertical branch at ${\cal N}=0$ which is absent in those cases.
Middle row: $D_G$ (solid blue line) and ${\cal N}^2$ (red dashed line) for $d \otimes d$ Werner states \eqref{swer} plotted as a function of the parameter $k \in [-1,1]$.
Bottom row: $D_G$ (solid blue line) and ${\cal N}^2$ (red dashed line) for $d \otimes d$ isotropic states \eqref{siso} plotted as a function of the parameter $p \in [0,1]$.
All the quantities plotted are dimensionless. }\label{figweriso}
\end{figure*}

Coming back to our measures of choice in this work, geometric discord and squared negativity, we can visualize their interplay on pure two-qudit states with increasing $d$. We have generated a large ensemble of two-qudit states up to $d=7$ with random Schmidt coefficients. At fixed negativity, the geometric discord displays both upper and lower bounds. The upper bounds are multi-branched, with an increasing number of nodes appearing with increasing $d$. The lower bounds are  regular curves lying in general strictly above the bisectrix for any $d>2$, with $D_G = {\cal N}^2$ occurring only at the extremal points where both vanish (on factorized states) or both reach the maximum (on maximally entangled states).
We find that, for any $d$, the pure two-qudit states that achieve the minimum geometric discord at fixed negativity (green curve in Fig.~\ref{figqudits}) have a peculiar distribution of Schmidt coefficients:
\begin{eqnarray}\label{saturi}
\alpha_0&=&\sin^2\theta\,, \\
\alpha_i &=& \frac{\cos^2\theta}{d-1}\ \ \forall\, i=1,\ldots,d-1\,, \nonumber
\end{eqnarray} with $\arccos\sqrt{(d-1)/d} \le \theta \le \pi/2$. Since this is true for every pure state in the special case $d=2$, this is a further proof that on two-qubit pure states $D_G$ equals ${\cal N}^2$ as observed in the previous Section.
In general, the lower bound on $D_G$ at fixed ${\cal N}$ as saturated by the states of \eq{saturi} is given by
\begin{eqnarray}\label{lboundd}
D_G^{low}({\cal N}) &=& \left[2 (d-R-1)+(d-2) (d-1) \mathcal{N}\right] \nonumber \\
 & \times & \left[2 \left((d-1)^2+R\right)-(d-2) (d-1) \mathcal{N}\right] \\
 & \times & \left[(d-1)^2 d^2\right]^{-1}\,, \nonumber
\end{eqnarray}
with $R=\sqrt{(d-1)^2 (1-\mathcal{N}) [1+(d-1) \mathcal{N}]}$.

\subsection{Werner and isotropic $\boldsymbol{d \otimes d}$ states}

The ordering relationship between geometric discord and squared negativity can be further extended rigorously to two special classes of {\it mixed} $d \otimes d$ highly symmetric states, namely the Werner states \cite{Werner89} and the isotropic states \cite{isot}. We recall that for both families of states the PPT criterion is necessary and sufficient for separability \cite{voll}.

The Werner states in arbitrary $d$ dimension take the form \cite{Werner89}
\begin{eqnarray}\label{swer}
\rho_w=\frac{d+k}{d^3-d}\mathbb{I}_{d}+\frac{-d k -1}{d^3-d}| \Phi\rangle\langle \Phi|,
\end{eqnarray}
where $\ket\Phi =\sum_{i,j=0}^{d-1}(|i j\rangle+|j i\rangle)$ and $k\in [-1,1]$ with $0< k \leq 1$ for entangled states. The geometric discord calculated in \cite{luofu} and then normalized is
\begin{eqnarray}
D_G(\rho_w)=\frac{(d k+1)^2}{(d+1)^2},
\end{eqnarray}
while after simple algebra we obtain the following expression for the (normalized) negativity:
\begin{eqnarray}
{\cal N}(\rho_w)=\max\left\{0, k \right\}.
\end{eqnarray}

The isotropic states can be instead defined as \cite{isot}
\begin{eqnarray} \label{siso}
\rho_i=\frac{1-p}{d^2-1}\mathbb{I}_{d}+\frac{d^2 p -1}{d^2-1}| \Psi\rangle\langle \Psi|,
\end{eqnarray}
where $\ket\Psi =\frac{1}{\sqrt{d}}\sum_{i=0}^{d-1}|i i\rangle$ and $p\in [0,1]$, with the states being entangled for $p > 1/d$.
In such a case the (normalized) geometric discord  \cite{luofu} is
\begin{eqnarray}
D_G(\rho_i)=\frac{(d^2 p-1)^2}{(d^2-1)^2},
\end{eqnarray}
and the negativity is given by \cite{lee}
\begin{eqnarray}
{\cal N}(\rho_i)= \max\left\{0,\frac{d p-1}{d-1}\right\}.
\end{eqnarray}

Interestingly, for both classes of states (in the nontrivial region of parameters where they are entangled) some straightforward algebra shows that the simple relationship
\begin{equation}
D_G(\rho_{w,i})=\left[\frac{1+d\  {\cal N}(\rho_{w,i})}{1+d}\right]^2 \ge {\cal N}^2(\rho_{w,i})
\label{dgnweriso}
\end{equation}
holds, thus establishing once again the desired hierarchy.

We notice a radical difference between quantum correlations of Werner and isotropic states in the region in which they are separable. Namely, for Werner states the quantum correlations measured by the geometric discord can grow approaching the maximum (which is one in normalized units) even without entanglement, with increasing dimension $d$, so that $\lim_{d \rightarrow \infty} D_G(\rho_w) = k^2$ in the full range $k \in [-1,1]$, half of which contains separable states. Therefore Werner states with high dimension and $k \rightarrow -1$ are examples of highly mixed, completely separable states whose quantum correlations are asymptotically as big as those of pure maximally entangled states, as predicted in \cite{genio} (see also \cite{chita}). On the other hand, for isotropic states, with increasing $d$ the separability region ($0\leq p \leq 1/d$) just shrinks to zero, meaning that in such a case the geometric discord just converges to the squared negativity in the full parameter range, with no significant signatures of quantum correlations exhibited in absence of entanglement. Note that the two families of states are instead completely equivalent in the limiting case $d=2$ (upon identifying $k=2p-1$). The interplay between $D_G$ and ${\cal N}^2$ for Werner and isotropic states of varying dimension is illustrated in Fig.~\ref{figweriso}.

\subsection{Generic $\boldsymbol{d \otimes d'}$ states}

Encouraged by the previous results, we now wish to test the validity of the  inequality $D_G\geq{\cal N}^2$ for generic mixed states of arbitrary $d \otimes d'$ dimensional systems. Specifically, we run a numerical exploration of the $D_G$ versus ${\cal N}^2$ plane for randomly generated mixed states of $2 \otimes 3$ systems. In this case, the geometric discord can be computed according to the prescription of Ref.~\cite{rau}, while the negativity still captures all entanglement potentially present in the states \cite{PPT}. Remarkably, based on extensive numerical evidence (see Fig.~\ref{duepertre}), we find that the hierarchy between geometric discord and squared negativity holds as well for arbitrary states of a qubit and a qutrit. This finding, in addition to the results of the previous Sections, motivates us to conjecture that $D_G\geq{\cal N}^2$ might be a universal ordering relationship for arbitrary $d \otimes d'$ dimensional systems. A general proof of this statement would be very valuable, and an interesting, related open question concerns investigating the role of bound entanglement in higher dimensions  and its interplay (not captured by the negativity) with geometric measures of quantum correlations.

\begin{figure}[t]
\includegraphics[width=8.2cm]{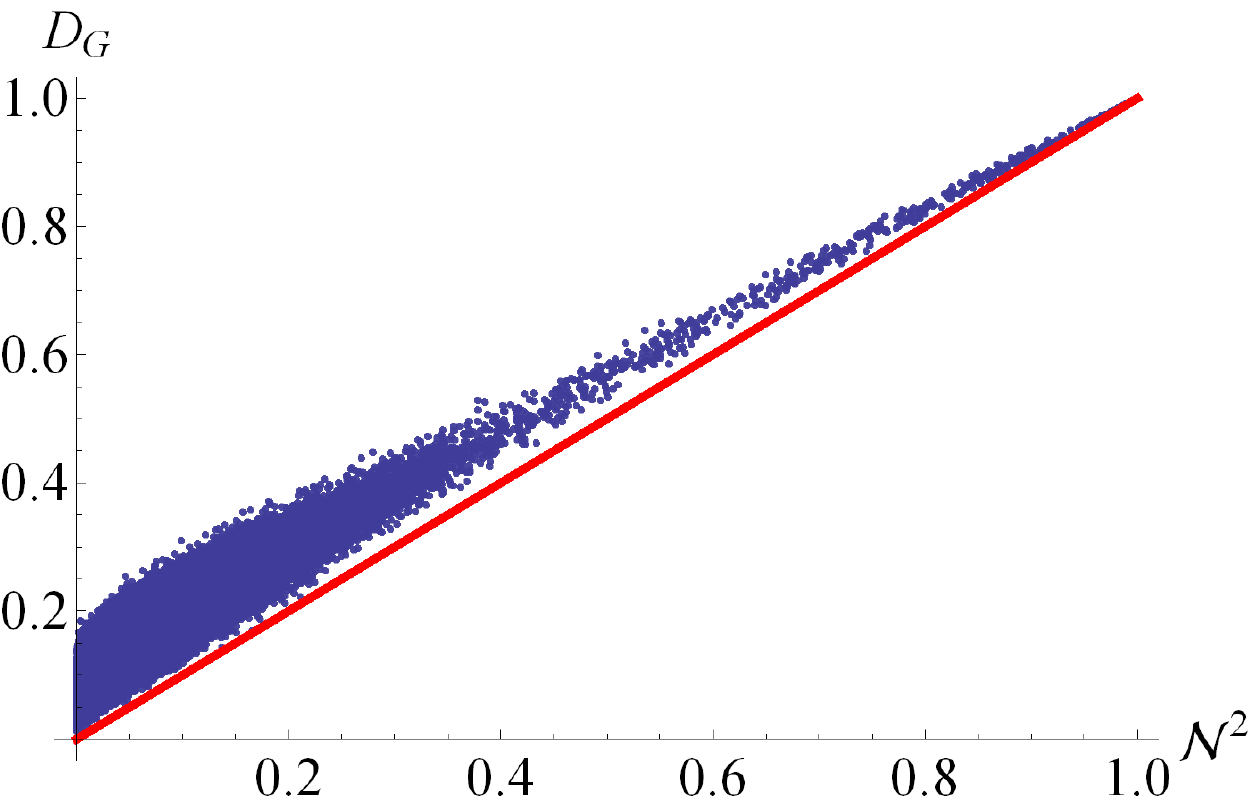}
\caption{Geometric discord versus squared negativity for $2\times 10^5$ mixed states of $2\otimes 3$ systems, randomly generated by using the \emph{Mathematica} package available in \cite{pacco}. All the quantities plotted are dimensionless.}
\label{duepertre}
\end{figure}

\section{Concluding remarks}\label{SecConcl}
 We have presented a qualitative and quantitative study of entanglement and general quantum correlations for arbitrary two-qubit states and for relevant instances of higher-dimensional states.

 First, we identified a computable measure of entanglement, the squared negativity ${\cal N}^2$ \cite{vidwer}, and proved that it is always majorized by a compatible measure of quantum correlations, the geometric discord $D_G$ \cite{dakic}, in the case of generic two-qubit states. The inequality is saturated for pure states.  We also provided numerical evidence that the squared negativity is still majorized by a tight lower bound $Q$ to the geometric discord, recently proposed as observable measure of quantum correlations \cite{pirla}. Thus, the chain $D_G\geq Q \geq {\cal N}^2$ holds for two-qubit states.

   Then, we explored the pattern of the plane $D_G$ vs ${\cal N}^2$, identifying the classes of two-qubit states with maximal geometric discord at fixed negativity. In particular, the bound is reached by a family of $X$ states given in \eq{X2}.  Remarkably, for separable states the upper bound accomodates  a fully asymmetric state, i.e. a state becoming a zero-discord classical-quantum state upon swapping of the subsystems.

Finally, we extended our analysis to arbitrary $d\otimes d'$ systems. For two-qudit pure states, we found that the hierarchy between geometric discord and squared negativity still holds rigorously. We characterized the states with minimal $D_G$ at fixed ${\cal N}^2$: they present an elegant parametrization of the distribution of their Schmidt coefficients, allowing to express analytically the lower bound in the $D_G$ vs ${\cal N}^2$ plane for any $d$ as in \eq{lboundd}. In the mixed-state case, the inequality is still valid for Werner and isotropic $d \otimes d$ states, for which $D_G$ is a simple function of the negativity for each dimension $d$. We further provided numerical evidence supporting the validity of the hierarchy between geometric discord and squared negativity for general mixed states of $2 \otimes 3$ systems. In all the instances analyzed in this paper, $D_G$ and ${\cal N}$ were computable in closed form, and were always found to obey the ordering relationship $D_G \ge {\cal N}^2$. We thus conjecture its validity on arbitrary bipartite states of any dimension, leaving open at the present stage the task of providing a rigorous general proof (or a counterexample) to our claim.

 Our results agree with the intuitive prediction that general quantum correlations should be somehow related to entanglement and definitely incorporate it \cite{nuovogenio}. Geometric discord \cite{dakic} (or its lower bound $Q$ \cite{pirla}) and negativity \cite{vidwer} are two computable, observable and experimentally friendly measures of quantum correlations whose interplay, explored in this paper, is important to get a quantitative grip on the performance of several quantum information protocols, ranging from quantum computation to quantum metrology and state discrimination \cite{dattacommun}. Understanding the nature of nonclassical correlations and their role in determining advantages over fully classical scenarios is a central issue in quantum information processing and communication \cite{merali}. On a more fundamental level, our findings suggest that nonclassical correlations measured by geometric discord could be regarded as a more general feature that somehow incorporates entanglement and state mixedness, following the intuition advanced in \cite{genio}.  Encouraging preliminary evidence that (the lower bound $Q$ to)  the geometric discord can be employed to characterize the dynamics of quantum correlations in open systems, and possibly other relevant features of open systems themselves (e.g. non-Markovianity \cite{nonM}), has been recently presented in \cite{pirla}. In this respect, the ordering relations we found suggest that entanglement and general quantum correlations as well can be both interrelated to such properties of open systems.
 Encouraged by the hierarchy pointed out in this paper, we believe it becomes even more meaningful to keep searching for simple but universal, physically motivated and mathematically accessible, unifying measures of  ``quantumness''  of the correlations, along the spirit of Refs.~\cite{modi,genio,bruss,nuovogenio}.

\acknowledgments{We acknowledge fruitful discussions with Shunlong Luo, Marco Piani, and Karol \.Zyczkowski. We thank the University of Nottingham for financial support through an Early Career Research and Knowledge Transfer Award and a Graduate School Travel Prize Award.}

\end{document}